\newtheorem{thm}{Theorem}[section]
\newtheorem{co}[thm]{Corollary}
\newtheorem{lem}[thm]{Lemma}
\newtheorem{assumption}[thm]{Assumption}
\newtheorem{pr}[thm]{Proposition}
\newtheorem{definition}[thm]{Definition}
\newenvironment{de}{\begin{definition}\rm}{\end{definition}}
\newtheorem{example}[thm]{Example}
\newenvironment{exmp}{\begin{example}\rm}{\end{example}}
\newtheorem{remark}[thm]{Remark}
\newenvironment{rem}{\begin{remark}\rm}{\end{remark}}
\newtheorem{tab}{Table}
\newcommand{\adj}{{\rm adj}\,}
\newcommand{\ord}{{\rm ord}\,}
\def\eps{\varepsilon}
\title{Asymptotics of Entropy Rate in Special Families of Hidden Markov Chains}
\author{\begin{tabular}{cc}
Guangyue Han&Brian Marcus\\
University of Hong Kong&University of British Columbia\\
{\em email:} ghan@maths.hku.hk&{\em email:} marcus@math.ubc.ca\\
\end{tabular}}
\date{{\normalsize \today}}
\begin{document}\maketitle\thispagestyle{empty}

\begin{abstract}
We generalize a result in~\cite{hm06b} and derive an asymptotic
formula for entropy rate of a hidden Markov chain around a ``weak
Black Hole''. We also discuss applications of the asymptotic formula
to the asymptotic behaviors of certain channels.
\end{abstract}

\textit{Index Terms}--entropy, entropy rate, hidden Markov chain,
hidden Markov model, hidden Markov process

\section{Introduction}

Consider a discrete finite-valued stationary stochastic process
$Y=Y_{-\infty}^{\infty}:=\{Y_n: n \in \mathbb{Z}\}$. The entropy rate of $Y$ is defined to be
$$
H(Y)=\lim_{n \to \infty} H(Y_{-n}^0)/(n+1);
$$
here, $H(Y_{-n}^0)$ denotes the joint entropy of $Y_{-n}^0 := \{Y_{-n}, Y_{-n+1}, \cdots, Y_{0}\}$, and $\log$ is taken to mean the natural logarithm.

If $Y$ is a Markov chain with alphabet
$\{1, 2, \cdots, B\}$ and transition probability matrix $\Delta$, it
is well known that $H(Y)$ can be explicitly expressed with the
stationary vector of $Y$ and $\Delta$. A function
$Z=Z_{-\infty}^{\infty}$ of the Markov chain $Y$ with the form
$Z=\Phi(Y)$ is called a {\em hidden Markov chain}; here $\Phi$ is a
function defined on $\{1, 2, \cdots, B\}$, taking values in $\mathcal{A}:=\{1, 2, \cdots, A\}$ (alternatively a hidden
Markov chain is defined as a Markov chain observed in noise). For a
hidden Markov chain, $H(Z)$ turns out (see Equation~(\ref{blackwell_form}))
to be the integral of a certain function defined on a simplex with respect to a measure due
to Blackwell~\cite{bl57}. However Blackwell's measure is somewhat
complicated and the integral formula appears to be difficult to
evaluate in most cases. In general it is very difficult to compute $H(Z)$;
so far there is no simple and explicit formula for $H(Z)$.

Recently, the problem of computing the entropy rate of a hidden
Markov chain $Z$ has drawn much interest, and many approaches have been
adopted to tackle this problem. For instance, Blackwell's measure
has been used to bound the entropy rate~\cite{or03} and a variation
on the Birch bound~\cite{bi62} was introduced in~\cite{eg04}. An
efficient Monte Carlo method for computing the entropy rate of a
hidden Markov chain was proposed independently by Arnold and
Loeliger~\cite{ar01}, Pfister et. al.~\cite{pf01}, and Sharma and
Singh~\cite{sh01}. The connection between the entropy rate of a
hidden Markov chain and the top Lyapunov exponent of a random matrix
product has been observed~\cite{ho03, ho06,jss04,gh95u}.
In~\cite{gm05}, it is shown that under mild positivity assumptions
the entropy rate of a hidden Markov chain varies analytically as a
function of the underlying Markov chain parameters.

Another recent approach is based on computing the coefficients of an
asymptotic expansion of the entropy rate around certain values of
the Markov and channel parameters. The first result along these
lines was presented in \cite{jss04}, where for a binary symmetric
channel with crossover probability $\eps$ (denoted by BSC($\eps$)),
the Taylor expansion of $H(Z)$ around $\eps=0$ is studied for a
binary hidden Markov chain of order one. In particular, the first
derivative of $H(Z)$ at $\eps=0$ is expressed very compactly as a
Kullback-Liebler divergence between two distributions on binary
triplets, derived from the marginal of the input process $X$.
Further improvements and new methods for the asymptotic expansion
approach were obtained in~\cite{or04},~\cite{zu05},~\cite{zu06}
and~\cite{hm06b}. In~\cite{or04} the authors express the entropy
rate for a binary hidden Markov chain where one of the transition
probabilities is equal to zero as an asymptotic expansion including
a $O(\eps\log\eps)$ term.

This paper is organized as follows. In Section~\ref{II} we give an
asymptotic formula (Theorem~\ref{main}) for the entropy rate of a hidden Markov chain
around a weak Black Hole. The coefficients in the formula can be computed in principle (although
explicit computations may be quite complicated in general). The formula can be viewed as a generalization of the Black Hole condition considered in~\cite{hm06b}. The {\em weak} Black Hole case is important for hidden Markov chains
obtained as output processes of noisy channels, corresponding to input processes,
{\em for which certain sequences have probability zero}. Examples are given in Section~\ref{III}. Example~\ref{treated} was already treated in~\cite{hm06a} for only the first few coefficients; but in this case, these coefficients were computed quite explicitly.

\section{Asymptotic Formula for Entropy Rate}  \label{II}

Let $W$ be the simplex, comprising the vectors
$$
\{w=(w_1, w_2, \cdots, w_B) \in \mathbb{R}^B:w_i \geq 0, \sum_i
w_i=1\},
$$
and let $W_a$ be all $w \in W$ with $w_i=0$ for $\Phi(i) \neq a$.
For $a \in \mathcal{A}$, let $\Delta_a$ denote the $B \times B$ matrix such
that $\Delta_a(i, j)=\Delta(i, j)$ for $j$ with $\Phi(j)=a$, and
$\Delta_a(i, j)=0$ otherwise. For $a \in A$, define the
scalar-valued and vector-valued functions $r_a$ and $f_a$ on $W$ by
$$
r_a(w)= w \Delta_a \mathbf{1},
$$
and
$$
f_a(w)=w \Delta_a/ r_a(w).
$$
Note that $f_a$ defines the action of the matrix $\Delta_a$ on the
simplex $W$.

If $Y$ is irreducible, it turns out that
\begin{equation} \label{blackwell_form}
H(Z)=-\int \sum_a r_a(w) \log r_a(w) dQ(w),
\end{equation}
where $Q$ is {\em Blackwell's measure}~\cite{bl57} on $W$. This
measure, which satisfies an integral equation dependent on the
parameters of the process, is however very hard to extract from the
equation in any explicit way.

\begin{de} (see~\cite{hm06b})
Suppose that for every $a \in \mathcal{A}$, $\Delta_a$ is a rank one matrix,
and every column of $\Delta_a$ is either strictly positive or all
zeros. We call this the {\em Black Hole} case.
\end{de}
It was shown~\cite{hm06b} that $H(Z)$ is analytic around a Black
Hole and the derivatives of $H(Z)$ can be exactly computed around a
Black Hole. In this sequel, we consider weakened assumptions and
prove an asymptotic formula for entropy rate of a hidden Markov
chain around a ``weak Black Hole'', generalizing the corresponding
result in~\cite{hm06b}.

\begin{de}
Suppose that for every $a \in \mathcal{A}$, $\Delta_a$ is either an all zero matrix or a
rank one matrix. We call this the {\em weak Black Hole} case.
\end{de}

We use the standard notation: by $\alpha = \Theta(\beta)$, we mean there exist positive constants $C_1, C_2$ such that $C_1 |\beta| \leq | \alpha | \leq C_2 |\beta|$, while by $ \alpha = O(\beta)$, we mean there exists a
positive constant $C$ such that $| \alpha | \leq C |\beta|$. For a given analytic function $f(\eps)$ around $\eps=0$, let $\ord(f(\eps))$ denote its order, i.e., the degree of the first non-zero term of its Taylor series expansion around $\eps=0$. Note that for an analytic function $f(\eps)$ around $\eps=0$,
$$
f(\eps)=\Theta(\eps^k) \Longleftrightarrow \ord(f(\eps))=k.
$$
We say $\Delta(\eps)$ is {\em normally parameterized} by $\eps$ ($\eps \geq 0$) if
\begin{enumerate}
\item each entry of $\Delta(\eps)$ is an analytic function at $\eps=0$,
\item when $\eps > 0$, $\Delta(\eps)$ is (non-negative and) irreducible,
\item $\Delta(0)$ is a weak black hole.
\end{enumerate}

In the following, expressions like $p_X(x)$ will be used to mean $P(X=x)$ and we drop the subscripts if the context is
clear: $p(x), p(z)$ mean $P(X=x), P(Z=z)$, respectively, and further $p(y|x), p(z_0|z_{-n}^{-1})$ mean $P(Y=y|X=x), P(Z_0=z_0|Z_{-n}^{-1}=z_{-n}^{-1})$, respectively.

\begin{pr}   \label{jp-analytic}
Suppose that $\Delta(\eps)$ is analytically parameterized by $\eps \geq 0$ and when $\eps > 0$, $\Delta(\eps)$ is non-negative and irreducible. Then for any fixed hidden Markov sequence $z_{-n}^{0} \in \mathcal{A}^{n+1}$,
\begin{enumerate}
\item $p(z_{-n}^{-1})$ is analytic around $\eps=0$;
\item $p(y_i=\cdot \;|z^i_{-n}):=(p(y_i=b|z^i_{-n}): b=1, 2, \cdots, B)$ is analytic around $\eps=0$, where
$\cdot$ denotes $B$ possible states of Markov chain $Y$,
\item $p(z_0|z_{-n}^{-1})$ is analytic around $\eps=0$.
\end{enumerate}
\end{pr}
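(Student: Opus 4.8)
The plan is to reduce all three statements to one analytic-continuation fact: for $\eps>0$ the (irreducible, stochastic) matrix $\Delta(\eps)$ has a unique stationary distribution $\pi(\eps)$, and $\pi(\eps)$ extends to a function analytic in a neighborhood of $\eps=0$. I would establish this through the adjugate representation of the stationary vector. Since $\Delta(\eps)$ is stochastic, $(I-\Delta(\eps))\one=0$, so $\det(I-\Delta(\eps))\equiv 0$ and hence $(I-\Delta(\eps))\,\adj(I-\Delta(\eps))=0=\adj(I-\Delta(\eps))\,(I-\Delta(\eps))$. For $\eps>0$, irreducibility forces $I-\Delta(\eps)$ to have rank exactly $B-1$, with right null space $\mathbb{R}\one$ and left null space $\mathbb{R}\pi(\eps)$; the first identity then shows every column of $\adj(I-\Delta(\eps))$ is a multiple of $\one$, so all of its rows coincide, and the second shows the common row equals $\lambda(\eps)\pi(\eps)$, where $\lambda(\eps):=\sum_j\adj(I-\Delta(\eps))_{1j}$ is nonzero because $\adj(I-\Delta(\eps))\neq 0$ (rank $B-1$). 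Each entry of $\adj(I-\Delta(\eps))$ is a polynomial in the entries of $\Delta(\eps)$, hence analytic at $\eps=0$; so, with $m:=\min_j\ord(\adj(I-\Delta(\eps))_{1j})$ (finite, as row $1$ of $\adj(I-\Delta(\eps))$ is not identically zero), the vector $M(\eps):=\eps^{-m}\,\adj(I-\Delta(\eps))_{1,\cdot}$ is analytic with $M(0)\neq 0$, the scalar $s(\eps):=\sum_j M(\eps)_j=\eps^{-m}\lambda(\eps)$ is analytic, and $\pi(\eps)=M(\eps)/s(\eps)$ for $\eps>0$.

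The step that is not purely formal, and the main obstacle, is to rule out $s(0)=0$: once this is done, $M(\eps)/s(\eps)$ — hence $\pi(\eps)$ — is analytic across $\eps=0$. Here I would use that $\pi(\eps)$ is a probability vector for $\eps>0$, so that $\|M(\eps)\|_1=|\eps^{-m}\lambda(\eps)|\cdot\|\pi(\eps)\|_1=|s(\eps)|$; letting $\eps\to 0^+$ gives $|s(0)|=\|M(0)\|_1>0$. Note that this argument uses only analyticity of the entries of $\Delta(\eps)$ together with irreducibility for $\eps>0$, and not the weak Black Hole structure of $\Delta(0)$.

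With $\pi(\eps)$ analytic at $\eps=0$, statement (1) is immediate: for fixed $n$, $p(z_{-n}^{-1})$ is the finite sum over hidden paths $y_{-n}^{-1}$ with $\Phi(y_t)=z_t$ of $\pi(y_{-n})\prod_{t=-n}^{-2}\Delta(y_t,y_{t+1})$, a sum of products of the analytic functions $\pi_j(\eps)$ and $\Delta_{kl}(\eps)$. The same reasoning gives analyticity of $p(z_{-n}^i)$ and of $p(y_i=b,z_{-n}^i)$ for every $b$ and every $i\ge -n$. For (2) and (3) I would write the $b$-th entry of $p(y_i=\cdot\mid z_{-n}^i)$ as $p(y_i=b,z_{-n}^i)/p(z_{-n}^i)$ and $p(z_0\mid z_{-n}^{-1})$ as $p(z_{-n}^0)/p(z_{-n}^{-1})$, and invoke the elementary fact that a quotient $f/g$ of functions analytic at $\eps=0$ with $g\not\equiv 0$ and $\ord(f)\ge\ord(g)$ is itself analytic at $\eps=0$. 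The needed order inequalities come for free from event containment: for all $\eps\ge 0$ near $0$ one has $0\le p(y_i=b,z_{-n}^i)\le p(z_{-n}^i)$ and $0\le p(z_{-n}^0)\le p(z_{-n}^{-1})$ (the left-hand inequalities at $\eps=0$ by continuity of these nonnegative quantities), so each numerator vanishes to at least the order of its denominator. (If a denominator is identically zero in $\eps$ the conditional probability is nowhere defined and there is nothing to prove; otherwise it is nonzero on a punctured neighborhood of $\eps=0$, where the quotient representation is valid, and the order bound makes the quotient analytic through $\eps=0$.) This yields (1)--(3).
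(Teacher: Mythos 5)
Your proposal is correct, and it splits into two parts relative to the paper. For statement \emph{1} you follow essentially the paper's own route: the stationary vector is extracted from a row of $\adj(I-\Delta(\eps))$, whose entries are polynomials in the (analytic) entries of $\Delta(\eps)$, and the only real issue is that the normalization does not destroy analyticity at $\eps=0$. The paper handles this by noting the row entries are non-negative with positive leading Taylor coefficients, so $\ord(\pi_i)\ge\ord(\sum_j\pi_j)$; you handle it by factoring out $\eps^m$ and deducing $s(0)\neq 0$ from the identity $\|M(\eps)\|_1=|s(\eps)|$ for $\eps>0$ (valid because the adjugate row is $\lambda(\eps)\pi(\eps)$ with $\pi(\eps)\ge 0$). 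These are the same no-cancellation idea; your version is a bit more explicit and yields the slightly stronger conclusion that $\pi(\eps)$ itself is analytic at $\eps=0$, which makes statement \emph{1} an immediate finite sum of products of analytic functions rather than the paper's quotient argument. For statements \emph{2} and \emph{3} you genuinely diverge: the paper proves analyticity of $x_{i,-n}=p(y_i=\cdot\,|z_{-n}^i)$ by induction along the forward iteration (\ref{iteration}), showing at each step that the conditional vector is a ratio of analytic vector and scalar with the needed order inequality, whereas you bypass the iteration entirely and write $p(y_i=b\,|z_{-n}^i)=p(y_i=b,z_{-n}^i)/p(z_{-n}^i)$ and $p(z_0|z_{-n}^{-1})=p(z_{-n}^0)/p(z_{-n}^{-1})$, with numerator and denominator manifestly analytic, and obtain $\ord(\mbox{numerator})\ge\ord(\mbox{denominator})$ from the pointwise domination $0\le f(\eps)\le g(\eps)$ on $[0,\delta)$; you also correctly dispose of the degenerate case of an identically vanishing denominator, which the paper passes over in silence. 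Your route is more direct and elementary for Proposition \ref{jp-analytic} itself; the paper's iterative formulation is not wasted effort, though, since exactly that recursion (together with Lemma \ref{independence}) is what is reused in Lemma \ref{StabilizingLemma} to control how many Taylor coefficients of $p(z_0|z_{-n}^{-1})$ stabilize, so the inductive machinery is set up there with the later argument in mind.
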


\begin{proof}

{\it 1.} When $\eps > 0$, $\Delta(\eps)$ is non-negative and irreducible. By Perron-Frobenius theory~\cite{se80},
$\Delta(\eps)$ has a unique positive stationary vector, say $\pi(\eps)$. Since
$$
\adj(I-\Delta(\eps)) (I-\Delta(\eps))=\det(I-\Delta(\eps)) I=0
$$
(here $\adj(\cdot)$ denotes the adjugate operator on matrices), one
can choose $\pi(\eps)$ to be any normalized row vector of
$\adj(I-\Delta(\eps))$. So $\pi (\eps)$ can be written as
$$
\frac{(\pi_1 (\eps), \pi_2 (\eps), \cdots, \pi_B (\eps))}{\pi_1
(\eps)+\pi_2 (\eps)+\cdots+\pi_B (\eps)},
$$
where $\pi_i (\eps)$'s are non-negative analytic functions of $\eps$ and the first non-zero
term of every $\pi_i(\eps)$'s Taylor series expansion has a positive coefficient. Then we conclude that for each $i$
$$
\ord(\pi_i (\eps)) \geq \ord(\pi_1 (\eps)+\cdots+\pi_B (\eps)),
$$
and thus $\pi(\eps)$, which is uniquely defined on $\eps > 0$, can
be continuously extended to $\eps=0$ via setting $\pi(0)=\lim_{\eps
\to 0} \pi(\eps)$.

Now
\begin{equation}   \label{dadada}
p(z_{-n}^{-1})=\pi (\eps) \Delta_{z_{-n}} \cdots \Delta_{z_{-1}}
\mathbf{1}=\frac{(\pi_1 (\eps), \pi_2 (\eps), \cdots, \pi_B (\eps))
\Delta_{z_{-n}} \cdots \Delta_{z_{-1}} \mathbf{1}}{\pi_1
(\eps)+\pi_2 (\eps)+\cdots+\pi_B (\eps)}=:\frac{f(\eps)}{g(\eps)},
\end{equation}
here $\ord(f(\eps)) \geq \ord(g(\eps))$. It then follows that $p(z_{-n}^{-1})$
is analytic around $\eps=0$.

{\it 2.} Let $x_{i, -n}=x_{i, -n}(z^i_{-n})$ denote $p(y_i=\cdot \;|z^i_{-n})$. Then one
checks that $x_{i, -n}$ satisfies the following iteration:
\begin{equation}  \label{iteration}
x_{i, -n}=\frac{x_{i-1, -n} \Delta_{z_i}}{x_{i-1, -n} \Delta_{z_i}
\mathbf{1}},  \qquad -n \leq i \leq -1,
\end{equation}
starting with $x_{-n-1, -n} = p(y_{-n-1}=\cdot\;)$. Because $\Delta$ is analytically
parameterized by $\eps$ ($\eps \geq 0$) and $\Delta(\eps)$ is non-negative and irreducible when $\eps > 0$, inductively
we can prove (the proof is similar to the proof of {\it 1.}) that for any $i$, $x_{i, -n}$ can be
written as follows:
$$
x_{i, -n}=\frac{(f_1(\eps), f_2(\eps), \cdots, f_B(\eps))}{f_1(\eps)+f_2(\eps)+\cdots+f_B(\eps)},
$$
where $f_i(\eps)$'s are analytic functions around $\eps=0$. Note that for
each $i$
$$
\ord(f_i(\eps)) \geq \ord(f_1(\eps)+f_2(\eps)+\cdots+f_B(\eps)).
$$
The existence of the Taylor series expansion of $x_{i, -n}$ around
$\eps=0$ (for any $i$) then follows.

{\it 3.} One checks that
\begin{equation}   \label{negative1}
p(z_0|z_{-n}^{-1})=x_{-1, -n} \Delta_{z_0} \mathbf{1}.
\end{equation}
Analyticity of $p(z_0|z_{-n}^{-1})$ immediately follows from (\ref{negative1}) and analyticity of $x_{-1, -n}$ around $\eps=0$, which has been shown in {\it 2.}.

\end{proof}

\begin{lem}   \label{independence}
Consider two formal series expansion $f(x), g(x) \in
\mathbb{R}[[x]]$ such that $f(x)=\sum_{i=0}^{\infty} f_i x^i$ and
$g(x)=\sum_{i=0}^{\infty} g_i x^i$, where $g_0 \neq 0$. Let $h(x)
\in \mathbb{R}[[x]]$ be the quotient of $f(x)$ and $g(x)$ with
$h(x)=\sum_{i=0}^{\infty} h_i x^i$. Then $h_i$ is a function
dependent only on $f_0, \cdots, f_i$ and $g_0, \cdots, g_i$.
\end{lem}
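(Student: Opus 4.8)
The plan is to exploit the defining identity $f(x) = g(x)\,h(x)$ and read off the coefficient of $x^i$ on both sides. Via the Cauchy product this gives
$$
f_i = \sum_{j=0}^{i} g_j\, h_{i-j},
$$
and since $g_0 \neq 0$ we may isolate the $j=0$ term and solve for $h_i$:
$$
h_i = \frac{1}{g_0}\Bigl( f_i - \sum_{j=1}^{i} g_j\, h_{i-j} \Bigr).
$$
This recursion is the entire engine of the argument; note in passing that it is exactly what guarantees that the quotient $h(x)$ exists in $\mathbb{R}[[x]]$ in the first place, the hypothesis $g_0 \neq 0$ being what makes the division by $g_0$ legitimate at every stage.

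Next I would argue by induction on $i$. The base case is $h_0 = f_0/g_0$, which depends only on $f_0$ and $g_0$, as claimed. For the inductive step, suppose $h_0, \dots, h_{i-1}$ each depend only on $f_0, \dots, f_{i-1}$ and $g_0, \dots, g_{i-1}$. The displayed recursion exhibits $h_i$ as a function of $f_i$, of $g_1, \dots, g_i$, and of $h_0, \dots, h_{i-1}$; substituting the inductive hypothesis into the latter slots shows that $h_i$ is a function of $f_0, \dots, f_i$ and $g_0, \dots, g_i$ alone. This closes the induction and proves the lemma.

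I expect no genuine obstacle: the only two points requiring a moment's care are that $g_0 \neq 0$ keeps the recursion well-defined, and that the Cauchy product never reaches back to an index $j > i$, so that no $f_j$ or $g_j$ with $j > i$ ever enters — both are immediate from the form of the convolution sum. If a more explicit statement were ever wanted, one could even write $h_i$ as a polynomial in $f_0, \dots, f_i$, $g_1, \dots, g_i$ and $1/g_0$ (equivalently, a ratio of determinants of the associated banded Toeplitz matrices), but only the stated dependence is needed in the sequel.
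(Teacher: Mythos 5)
Your proof is correct and follows essentially the same route as the paper: both compare coefficients in $f(x)=g(x)h(x)$ to get the convolution identity $h_0 g_i + \cdots + h_i g_0 = f_i$ and then induct on $i$. Your version merely makes the recursion $h_i = \bigl(f_i - \sum_{j=1}^{i} g_j h_{i-j}\bigr)/g_0$ explicit, which is a harmless refinement.
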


\begin{proof}

Comparing the coefficients of all the terms in the following
identity:
$$
\left(\sum_{i=0}^{\infty} h_i x^i \right) \left(\sum_{i=0}^{\infty}
g_i x^i \right)=\sum_{i=0}^{\infty} f_i x^i,
$$
we obtain that for any $i$,
$$
h_0 g_i+h_1 g_{i-1}+\cdots+ h_i g_0=f_i.
$$
The lemma then follows from an induction (on $i$) argument.

\end{proof}

By Proposition~\ref{jp-analytic}, for any hidden Markov string
$z_{-m}^0$, the Taylor series expansion of $p(z_0|z_{-m}^{-1})$
around $\eps=0$ exists. We use $b_j(z_{-m}^0)$ to represent the
coefficient of $\eps^j$ in the expansion, namely
\begin{equation}    \label{expansion-1}
p(z_0|z_{-m}^{-1})=b_0(z_{-m}^0)+b_1(z_{-m}^0) \eps +b_2(z_{-m}^0)
\eps^2+\cdots.
\end{equation}

The following lemma shows that under certain conditions, some coefficients $b_j(z_{-m}^0)$ ``stabilize''. More precisely, we have:
\begin{lem}    \label{StabilizingLemma}
Consider a hidden Markov chain $Z$ with normally parameterized $\Delta(\eps)$. For two fixed hidden Markov chain sequences
$z_{-m}^0, \hat{z}_{-\hat{m}}^0$ such that
$$
z_{-n}^0=\hat{z}_{-n}^0, \qquad \ord(p(z_{-n}^{-1}|z_{-m}^{-n-1})), \;\; \ord(p(\hat{z}_{-n}^{-1}|\hat{z}_{-\hat{m}}^{-n-1})) \leq k
$$
for some $n \leq m, \hat{m}$ and some $k$, we have for $j$ with
$0 \leq j \leq n-4k-1$,
$$
b_j(z_{-m}^0)=b_j(\hat{z}_{-\hat{m}}^0).
$$
\end{lem}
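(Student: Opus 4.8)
\bigskip
\noindent\textbf{Proof proposal.}

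The plan is to write both conditional probabilities as ratios built from one common matrix product, show that this product is ``essentially rank one modulo $\eps^{n}$'' --- precisely, that all of its $2\times 2$ minors have order at least $n$ --- and then clear denominators, whose orders are bounded by $k$ by hypothesis.

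First I would isolate the common data. Since $z_{-n}^{0}=\hat z_{-n}^{0}$, the matrix $S:=\Delta_{z_{-n}}(\eps)\cdots\Delta_{z_{-1}}(\eps)$ and the column vector $\delta:=\Delta_{z_{0}}(\eps)\mathbf{1}$ coincide for the two sequences. Put $u:=p(y_{-n-1}=\cdot\,|\,z_{-m}^{-n-1})$ and $\hat u:=p(y_{-n-1}=\cdot\,|\,\hat z_{-\hat m}^{-n-1})$; both are probability vectors, analytic around $\eps=0$ by Proposition~\ref{jp-analytic}. Iterating the filtering recursion~(\ref{iteration}) from time $-n$ to $-1$, the normalizing denominators telescope, and together with~(\ref{negative1}) one obtains, as identities of analytic functions near $\eps=0$,
$$
p(z_{0}|z_{-m}^{-1})=\frac{u\,S\,\delta}{u\,S\,\mathbf{1}},\qquad
p(z_{0}|\hat z_{-\hat m}^{-1})=\frac{\hat u\,S\,\delta}{\hat u\,S\,\mathbf{1}}.
$$
Here $u\,S\,\mathbf{1}=p(z_{-n}^{-1}|z_{-m}^{-n-1})$ and $\hat u\,S\,\mathbf{1}=p(\hat z_{-n}^{-1}|\hat z_{-\hat m}^{-n-1})$, so by hypothesis $\ord(u\,S\,\mathbf{1})\le k$ and $\ord(\hat u\,S\,\mathbf{1})\le k$; in particular neither is identically zero.

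The heart of the matter is the minor bound. As $\Delta(0)$ is a weak Black Hole, each $\Delta_{a}(0)$ has rank at most one, so every $2\times 2$ minor of $\Delta_{a}(\eps)$ vanishes at $\eps=0$ and hence is $O(\eps)$. Applying the Cauchy--Binet formula iteratively, a $2\times 2$ minor of the $n$-fold product $S$ is a finite sum of products of $n$ such minors, one from each $\Delta_{z_{i}}(\eps)$, hence is $O(\eps^{n})$; the same then holds for the $2\times 2$ minors of the $B\times 2$ matrix $[\,S\mathbf{1}\mid S\delta\,]=S[\,\mathbf{1}\mid\delta\,]$. Subtracting the two ratios over a common denominator,
$$
p(z_{0}|z_{-m}^{-1})-p(z_{0}|\hat z_{-\hat m}^{-1})
=\frac{(u\,S\,\delta)(\hat u\,S\,\mathbf{1})-(\hat u\,S\,\delta)(u\,S\,\mathbf{1})}{(u\,S\,\mathbf{1})(\hat u\,S\,\mathbf{1})},
$$
and expanding the numerator in coordinates rewrites it as $\sum_{i<j}(u_{i}\hat u_{j}-u_{j}\hat u_{i})\big((S\mathbf{1})_{i}(S\delta)_{j}-(S\mathbf{1})_{j}(S\delta)_{i}\big)$, a combination with bounded analytic coefficients of $2\times 2$ minors of $S[\,\mathbf{1}\mid\delta\,]$, hence $O(\eps^{n})$. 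Since the left-hand side is analytic (Proposition~\ref{jp-analytic}) and the denominator has order $\ord(u\,S\,\mathbf{1})+\ord(\hat u\,S\,\mathbf{1})\le 2k$, we conclude
$$
\ord\big(p(z_{0}|z_{-m}^{-1})-p(z_{0}|\hat z_{-\hat m}^{-1})\big)\ \ge\ n-2k ,
$$
so $b_{j}(z_{-m}^{0})=b_{j}(\hat z_{-\hat m}^{0})$ for $0\le j\le n-2k-1$; this is stronger than, and a fortiori yields, the asserted range $0\le j\le n-4k-1$.

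I expect two points to need care. The conceptual one is that one must \emph{not} compare the Taylor coefficients of numerator and denominator separately --- $u\,S\,\mathbf{1}$ and $\hat u\,S\,\mathbf{1}$ typically already differ at order $0$, so only the full ratio (equivalently the cross-difference $(u\,S\,\delta)(\hat u\,S\,\mathbf{1})-(\hat u\,S\,\delta)(u\,S\,\mathbf{1})$) stabilizes. The technical one is the order bookkeeping: justifying the iterated Cauchy--Binet expansion for a $2\times 2$ minor of an $n$-fold product, and tracking how the hypothesis $\ord\le k$ on the two ``bridging'' probabilities feeds into the denominator --- this is precisely where the coefficient of $k$ in the bound is pinned down, and a looser estimate there is presumably the source of the stated $4k$.
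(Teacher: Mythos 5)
Your proposal is correct, but it proves the lemma by a genuinely different route than the paper. The paper argues by induction along the filtering recursion (\ref{m-iteration}): it tracks, step by step from $i=-n$ to $i=-1$, which Taylor coefficients $a_j(z_{-m}^i)$ of the predictive distribution $p(y_i=\cdot\,|z_{-m}^i)$ coincide for the two sequences, losing $J(z_{-m}^l)$ orders of agreement at each symbol whose conditional probability vanishes at $\eps=0$; the rank-one hypothesis enters locally, to cancel the leading term at each such step, and the bound $\sum_l \max\{J,\hat J\}\le 4k$ produces the stated range $0\le j\le n-4k-1$. You instead work globally: writing $p(z_0|z_{-m}^{-1})=uS\delta/(uS\mathbf{1})$ with $S=\Delta_{z_{-n}}\cdots\Delta_{z_{-1}}$, the rank-one hypothesis enters once, through Cauchy--Binet, to show every $2\times 2$ minor of $S$ (hence of $S[\,\delta\mid\mathbf{1}\,]$) has order at least $n$, and the cross-difference identity reduces the comparison to those minors divided by a denominator of order at most $2k$. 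Your bookkeeping is sound: $uS\mathbf{1}=p(z_{-n}^{-1}|z_{-m}^{-n-1})$ by the Markov property, $u,\hat u$ are analytic at $\eps=0$ by Proposition~\ref{jp-analytic} (or are the analytically extended stationary vector when $n=m$), the identities valid for $\eps>0$ pass to power-series identities by analyticity, and orders add under multiplication, so $\ord\bigl(p(z_0|z_{-m}^{-1})-p(z_0|\hat z_{-\hat m}^{-1})\bigr)\ge n-2k$ (the degenerate case of an identically zero numerator being trivial). Thus you obtain agreement of $b_j$ for $0\le j\le n-2k-1$, which is sharper than the paper's $n-4k-1$ and a fortiori proves the lemma; indeed the $4k$ in the paper is an artifact of bounding $\sum_l\max\{J,\hat J\}$ by $\sum_l(J+\hat J)$ in the induction. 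A further advantage of your argument is that it never uses that $u,\hat u$ are filters of the same chain, only that they are $W$-valued analytic maps, so it yields Lemma~2.5 and Corollary~\ref{mixedwithy} with no extra work; what the paper's inductive proof buys in exchange is explicit information about how each coefficient $a_j(z_{-m}^{i+1})$ depends on earlier coefficients and on $\Delta^{(l)}(0)$, which is in the spirit of the computability claims in Theorem~\ref{main}. The only additions I would ask for are one sentence justifying the telescoping identity (conditional independence given $Y_{-n-1}$) and one noting that the iterated Cauchy--Binet expansion has finitely many terms for fixed $n$, so the order bound is immediate.
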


\begin{proof}

Recall that $x_{i, -m}=x_{i, -m}(z^i_{-m})=p(y_i=\cdot
\;|z^i_{-m})$ and $\hat{x}_{i, -\hat{m}}=\hat{x}_{i, -\hat{m}}(\hat{z}^i_{-\hat{m}})=p(y_i=\cdot
\;|\hat{z}^i_{-\hat{m}})$, where $\cdot$ denotes the possible states of Markov
chain $Y$. Consider the Taylor series expansion of $x_{i, -m}, \hat{x}_{i, -\hat{m}}$
around $\eps=0$,
\begin{equation}   \label{TaylorSeries_1}
x_{i, -m}=a_0(z_{-m}^i)+a_1(z_{-m}^i) \eps +a_2(z_{-m}^i)
\eps^2+\cdots
\end{equation}
\begin{equation}   \label{TaylorSeries_2}
\hat{x}_{i, -\hat{m}}=a_0(z_{-\hat{m}}^i)+a_1(z_{-\hat{m}}^i) \eps +a_2(z_{-\hat{m}}^i)
\eps^2+\cdots
\end{equation}
We shall show that $a_j(z_{-m}^i)=a_j(\hat{z}_{-\hat{m}}^i)$ for $j$ with
$$
0 \leq j \leq n+i-\sum_{l=-n}^i \max \{J(z_{-m}^l), J(\hat{z}_{-\hat{m}}^l)\},
$$
where for any hidden Markov sequence $z_{-m}^i$,
$$
J(z_{-m}^i)=\left \{\begin{array}{cc}
1+\ord(p(z_i|z_{-m}^{i-1})) &\mbox{ if } \ord(p(z_i|z_{-m}^{i-1})) > 0 \\
\\
0 & \mbox{ if } \ord(p(z_i|z_{-m}^{i-1})) = 0 \\
\end{array}
\right. .
$$

Recall that
\begin{equation}  \label{m-iteration}
x_{i+1, -m}=\frac{x_{i, -m} \Delta_{z_{i+1}}(\eps)}{x_{i, -m}
\Delta_{z_{i+1}}(\eps) \mathbf{1}}.
\end{equation}
Now with (\ref{TaylorSeries_1}) and (\ref{TaylorSeries_2}), we have
\begin{equation} \label{moredetail}
x_{i, -m} \Delta_{z_{i+1}}(\eps)=\sum_{j=0}^{\infty} a_j(z_{-m}^i)
\sum_{k=0}^{\infty} \frac{\Delta^{(k)}_{z_{i+1}}(0)}{k!} \eps^k
=\sum_{l=0}^{\infty} c_l(z_{-m}^{i+1}) \eps^l,
\end{equation}
where superscript ${}^{(k)}$ denotes the $k$-th order derivative with respect to $\eps$.

We proceed by induction on $i$ (from $-n$ to $-1$).

First consider the case when
$i=-n$. When $\max \{J(z_{-m}^{-n}), J(\hat{z}_{-\hat{m}}^{-n})\} > 0$, the statement is vacuously true; when $J(z_{-m}^{-n})=J(\hat{z}_{-\hat{m}}^{-n})=0$, necessarily $\Delta_{z_{-n}}(0)$ is a rank one matrix, $a_0(z_{-m}^{-n-1}) \Delta_{z_{-n}}(0) \mathbf{1} > 0$ and $a_0(\hat{z}_{-\hat{m}}^{-n-1}) \Delta_{z_{-n}}(0) \mathbf{1} > 0$. Then we have
$$
a_0(z_{-m}^{-n})=\frac{a_0(z_{-m}^{-n-1}) \Delta_{z_{-n}}(0)}{a_0(z_{-m}^{-n-1}) \Delta_{z_{-n}}(0) \mathbf{1}} \stackrel{(*)}{=}\frac{a_0(-\hat{z}_{-\hat{m}}^{-n-1}) \Delta_{z_{-n}}(0)}{a_0(\hat{z}_{-\hat{m}}^{-n-1}) \Delta_{z_{-n}}(0) \mathbf{1}}=a_0(\hat{z}_{-\hat{m}}^{-n}),
$$
where $(*)$ follows from the fact that $\Delta_{z_{-n}}(0)$ is a rank one matrix.

Now suppose $i \geq -n$ and that $a_j(z_{-m}^i)=a_j(\hat{z}_{-\hat{m}}^i)$ for $j$ with $0 \leq j \leq n+i-\sum_{l=-n}^i \max \{J(z_{-m}^l), J(\hat{z}_{-\hat{m}}^l)\}$.

If $\ord(p(z_{i+1}|z_{-m}^i)) > 0$, since the leading coefficient vector of the Taylor series expansion in (\ref{moredetail}) is non-negative, $c_j(z_{-m}^{i+1}) \equiv \mathbf{0}$ for all $j$ with $0 \leq j \leq J(z_{-m}^{i+1})-2$ and $c_{J(z_{-m}^{i+1})-1}(z_{-m}^{i+1}) \not \equiv \mathbf{0}$. So applying Lemma~\ref{independence} to the following expression
\begin{equation}   \label{iteration}
x_{i+1, -m}=\frac{c_0(z_{-m}^{i+1})+c_1(z_{-m}^{i+1}) \eps +\cdots+c_l(z_{-m}^{i+1})
\eps^l+ \cdots}{c_0(z_{-m}^{i+1}) \mathbf{1}+c_1(z_{-m}^{i+1}) \mathbf{1} \eps
+\cdots+c_l(z_{-m}^{i+1}) \mathbf{1} \eps^l+ \cdots}=\frac{\sum_{l=0}^{\infty} c_{l+J(z_{-m}^{i+1})-1}(z_{-m}^{i+1})
\eps^l}{\sum_{l=0}^{\infty} c_{l+J(z_{-m}^{i+1})-1}(z_{-m}^{i+1}) \mathbf{1}
\eps^l},
\end{equation}
we conclude that for all $j$, $a_j(z_{-m}^{i+1})$ depends only on
$$
c_l(z_{-m}^{i+1}), \qquad J(z_{-m}^{i+1})-1 \leq l \leq J(z_{-m}^{i+1})-1+j,
$$
implying that $a_j(z_{-m}^{i+1})$ depends only on (or some of)
$$
a_l(z_{-m}^{i}), \qquad \Delta^{(l)}_{z_{i+1}}(0), \qquad 0 \leq l \leq J(z_{-m}^{i+1})-1+j.
$$
A completely parallel argument also applies to the case when $\ord(p(\hat{z}_{i+1}|\hat{z}_{-\hat{m}}^i)) > 0$. More specifically, the statements above for the case $\ord(p(z_{i+1}|z_{-m}^i)) > 0$ are still true if we replace $z, x, m$ with $\hat{z}, \hat{x}, \hat{m}$, which implies that $a_j(\hat{z}_{-\hat{m}}^{i+1})$ depends only on (or some of)
$$
a_l(\hat{z}_{-\hat{m}}^{i}), \qquad \Delta^{(l)}_{\hat{z}_{i+1}}(0), \qquad 0 \leq l \leq J(\hat{z}_{-\hat{m}}^{i+1})-1+j.
$$
Thus when $\max \{J(z_{-m}^{i+1}), J(\hat{z}_{-\hat{m}}^{i+1})\} > 0$, we have $a_j(z_{-m}^{i+1})=a_j(\hat{z}_{-\hat{m}}^{i+1})$ for $j$ with
$$
0 \leq j \leq n+i-\sum_{l=-n}^i \max \{J(z_{-m}^l), J(\hat{z}_{-\hat{m}}^l)\}-\max \{J(z_{-m}^{i+1})-1, J(\hat{z}_{-\hat{m}}^{i+1})-1\}
$$
$$
=n+(i+1)-\sum_{l=-n}^{i+1} \max \{J(z_{-m}^l), J(\hat{z}_{-\hat{m}}^l)\}.
$$

If $\ord(p(z_{i+1}|z_{-m}^i)) =0$, by (\ref{negative1}) necessarily we have
$$
a_0(z_{-m}^i) \Delta_{z_{i+1}}(0) \mathbf{1} \neq 0.
$$
Again by Lemma~\ref{independence} applied to expression
(\ref{iteration}), for any $j$, $a_j(z_{-m}^{i+1})$ depends only on
$$
a_l(z_{-m}^{i}), \qquad \Delta^{(l)}_{z_{i+1}}(0), \qquad 0 \leq l \leq j,
$$
Similarly if $\ord(p(\hat{z}_{i+1}|\hat{z}_{-m}^i)) =0$, we deduce that for any $j$, $a_j(\hat{z}_{-m}^{i+1})$
depends only on
$$
a_l(\hat{z}_{-m}^{i}), \qquad \Delta^{(l)}_{\hat{z}_{i+1}}(0), \qquad 0 \leq l \leq j.
$$
Thus if $\max \{J(z_{-m}^{i+1}), J(\hat{z}_{-\hat{m}}^{i+1})\}=0$, for any $j$ with
$$
0 \leq j \leq n+i-\sum_{l=-n}^i \max \{J(z_{-m}^l), J(\hat{z}_{-\hat{m}}^l)\}
= n+i-\sum_{l=-n}^{i+1} \max \{J(z_{-m}^l), J(\hat{z}_{-\hat{m}}^l)\},
$$
we have $a_j(z_{-m}^{i+1})=a_j(\hat{z}_{-m}^{i+1})$.

Now,
let $t = n+(i+1)-\sum_{l=-n}^{i+1} \max \{J(z_{-m}^l), J(\hat{z}_{-\hat{m}}^l)\}$.
Then one can show that
$$
a_t(z_{-m}^{i+1})=\frac{a_{t}(z_{-m}^i) \Delta_{z_{i+1}}(0)
a_0(z_{-m}^i) \Delta_{z_{i+1}}(0) \mathbf{1}-a_0(z_{-m}^i)
\Delta_{z_{i+1}}(0) a_{t}(z_{-m}^i) \Delta_{z_{i+1}}(0)
\mathbf{1}}{(a_0(z_{-m}^i) \Delta_{z_{i+1}}(0) \mathbf{1})^2}+\mbox{
other terms },
$$
where the first term in the expression above is equal to $\mathbf{0}$
(since $\Delta_{z_{i+1}}(0)$ is a rank one matrix), and
the ``other terms'' are functions of
\begin{equation}
\label{quantities}
a_0(z_{-m}^{i}), \cdots, a_{t-1}(z_{-m}^{i}), \Delta^{(0)}_{z_{i+1}}(0), \cdots,
\Delta^{(t)}_{z_{i+1}}(0).
\end{equation}
It follows that $a_j(z_{-m}^{i+1})$ is a function of the same quantities in
(\ref{quantities}).
%
By a completely parallel argument as above,
$a_j(\hat{z}_{-\hat{m}}^{i+1})$ is the same function of of the same quantities in
(\ref{quantities}).
So we have $a_j(z_{-m}^{i+1})=a_j(\hat{z}_{-\hat{m}}^{i+1})$ for $j$ with
$$
0 \leq j \leq n+(i+1)-\sum_{l=-n}^{i+1} \max \{J(z_{-m}^l), J(\hat{z}_{-\hat{m}}^l)\}.
$$

Notice that
$$
\sum_{l=-n}^{-1} \max \{J(z_{-m}^l), J(\hat{z}_{-\hat{m}}^l)\} \leq \sum_{l=-n}^{-1} (J(z_{-m}^l)+J(\hat{z}_{-\hat{m}}^l)) \leq 4k.
$$
The lemma then immediately follows from (\ref{negative1}) and the
proven fact that $a_j(z_{-m}^{-1})=a_j(\hat{z}_{-\hat{m}}^{-1})$ for $j$ with
$$
0 \leq j \leq n-1-\sum_{l=-n}^{-1} \max \{J(z_{-m}^l), J(\hat{z}_{-\hat{m}}^l)\}.
$$
\end{proof}

For a mapping $v = v(\eps): [0, \infty) \to W$ analytic at $\eps=0$ and a hidden Markov sequence $z_{-n}^0$, define
$$
p_v(z_{-n}^{-1}) = v \Delta_{z_{-n}} \cdots \Delta_{z_{-1}} {\bf{1}}, \mbox{ and } p_v(z_0|z_{-n}^{-1}) = \frac{p_v(z_{-n}^0) }{p_v(z_{-n}^{-1}) }.
$$
Let $b_{v, j}(z_{-n}^0)$ denote the coefficient of $\eps^j$ in the Taylor series expansion of $p_v(z_0|z_{-n}^{-1})$ (note that $b_{v, j}(z_{-n}^0)$ does not depend on $\eps$),
$$
p_v(z_0|z_{-n}^{-1}) = \sum_{j=0}^\infty b_{v, j}(z_{-n}^0) \eps^j.
$$
Using the same inductive approach in Lemma~\ref{StabilizingLemma}, we can prove that

\begin{lem}
For two mappings $v = v(\eps), \hat{v}=\hat{v}(\eps): [0, \infty) \to W$ analytic at $\eps=0$, if $\ord(p_v(z_{-n}^{-1})), \ord(p_{\hat{v}}(z_{-n}^{-1})) \le k$, we then have
$$
b_{v, j}(z_{-n}^0) =  b_{\hat{v}, j}(z_{-n}^0), ~~  0 \le j \le n - 4k -1.
$$
\end{lem}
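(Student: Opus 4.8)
The plan is to transcribe the proof of Lemma~\ref{StabilizingLemma} almost verbatim, the only structural change being that we now compare a single symbol string $z_{-n}^0$ propagated from two different analytic initial vectors, in place of two different strings sharing a common suffix. We keep the standing hypotheses of Lemma~\ref{StabilizingLemma}: $\Delta(\eps)$ is normally parameterized and $z_{-n}^0$ is a fixed hidden Markov sequence. For an analytic $v\colon[0,\infty)\to W$ introduce the filter
$$
x_i^v=x_i^v(z_{-n}^i):=\frac{v\,\Delta_{z_{-n}}(\eps)\cdots\Delta_{z_i}(\eps)}{v\,\Delta_{z_{-n}}(\eps)\cdots\Delta_{z_i}(\eps)\,\mathbf{1}},\qquad -n-1\le i\le -1,
$$
so that $x_{-n-1}^v=v$, the vector $x_i^v$ obeys the same normalized iteration as in~(\ref{m-iteration}), and, exactly as in~(\ref{negative1}), $p_v(z_i|z_{-n}^{i-1})=x_{i-1}^v\,\Delta_{z_i}(\eps)\,\mathbf{1}$; in particular $p_v(z_0|z_{-n}^{-1})=x_{-1}^v\,\Delta_{z_0}(\eps)\,\mathbf{1}$.

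First I would check that each $x_i^v$ is analytic at $\eps=0$ by the nonnegativity argument of Proposition~\ref{jp-analytic}: for $\eps>0$ the vector $v(\eps)$ and the matrices $\Delta_a(\eps)$ are nonnegative, so every entry of $v\,\Delta_{z_{-n}}(\eps)\cdots\Delta_{z_i}(\eps)$ has a nonnegative leading Taylor coefficient (or is identically zero) and no cancellation occurs in forming the row sum; hence the order of each entry is at least the order of the row sum, and the quotient is analytic at $\eps=0$. Applied inductively, this also shows the filters are well defined, since $\ord(p_v(z_{-n}^{-1}))\le k<\infty$ forces every prefix $v\,\Delta_{z_{-n}}(\eps)\cdots\Delta_{z_i}(\eps)$ to be not identically zero; and likewise for $\hat v$. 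Write $x_i^v=\sum_j a_j^v(z_{-n}^i)\eps^j$ and $x_i^{\hat v}=\sum_j a_j^{\hat v}(z_{-n}^i)\eps^j$.

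Next I would run the induction on $i$ from $-n$ to $-1$, exactly as in Lemma~\ref{StabilizingLemma}, with
$$
J_v(z_{-n}^i)=\begin{cases}1+\ord(p_v(z_i|z_{-n}^{i-1}))&\mbox{if }\ord(p_v(z_i|z_{-n}^{i-1}))>0,\\ 0&\mbox{if }\ord(p_v(z_i|z_{-n}^{i-1}))=0,\end{cases}
$$
and $J_{\hat v}$ defined analogously, the inductive claim being $a_j^v(z_{-n}^i)=a_j^{\hat v}(z_{-n}^i)$ for $0\le j\le n+i-\sum_{l=-n}^{i}\max\{J_v(z_{-n}^l),J_{\hat v}(z_{-n}^l)\}$. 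In the base case $i=-n$, either some $J$ is positive and the claim is vacuous, or $v(0)\Delta_{z_{-n}}(0)\mathbf{1}>0$ and $\hat v(0)\Delta_{z_{-n}}(0)\mathbf{1}>0$, in which case $\Delta_{z_{-n}}(0)$ is nonzero and hence, being a weak black hole, rank one; then $v(0)\Delta_{z_{-n}}(0)$ and $\hat v(0)\Delta_{z_{-n}}(0)$ are positive multiples of the common row direction of $\Delta_{z_{-n}}(0)$ and therefore coincide after normalization, i.e.\ $a_0^v(z_{-n}^{-n})=a_0^{\hat v}(z_{-n}^{-n})$. The inductive step is structurally identical to the one in Lemma~\ref{StabilizingLemma}: expanding $x_i^v\Delta_{z_{i+1}}(\eps)$ as in~(\ref{moredetail}), pulling out the leading power of $\eps$, and applying Lemma~\ref{independence} to the resulting ratio shows that $a_j^v(z_{-n}^{i+1})$ and $a_j^{\hat v}(z_{-n}^{i+1})$ are the \emph{same} function of the data $a_l(z_{-n}^i)$ and $\Delta^{(l)}_{z_{i+1}}(0)$ for $l$ up to $\max\{J_v(z_{-n}^{i+1}),J_{\hat v}(z_{-n}^{i+1})\}-1+j$, the rank-one collapse of $\Delta_{z_{i+1}}(0)$ annihilating the leading cross term exactly as there; hence they agree on the asserted range.

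Finally, since $p_v(z_0|z_{-n}^{-1})=x_{-1}^v\Delta_{z_0}(\eps)\mathbf{1}$, the coefficient $b_{v,j}(z_{-n}^0)$ is a fixed polynomial in $a_0^v(z_{-n}^{-1}),\dots,a_j^v(z_{-n}^{-1})$ and $\Delta^{(0)}_{z_0}(0),\dots,\Delta^{(j)}_{z_0}(0)$ --- the same polynomial for $v$ and $\hat v$ --- so $b_{v,j}(z_{-n}^0)=b_{\hat v,j}(z_{-n}^0)$ whenever $0\le j\le n-1-\sum_{l=-n}^{-1}\max\{J_v(z_{-n}^l),J_{\hat v}(z_{-n}^l)\}$. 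The telescoping identity $\ord(p_v(z_{-n}^{-1}))=\sum_{l=-n}^{-1}\ord(p_v(z_l|z_{-n}^{l-1}))$ together with $1+\ord\le 2\ord$ for positive orders gives $\sum_{l=-n}^{-1}J_v(z_{-n}^l)\le 2k$, and likewise $\sum_{l=-n}^{-1}J_{\hat v}(z_{-n}^l)\le 2k$, so $\sum_{l=-n}^{-1}\max\{J_v,J_{\hat v}\}\le 4k$, which yields the stated range $0\le j\le n-4k-1$. The only work beyond invoking Lemma~\ref{StabilizingLemma} is verifying the analyticity of $x_i^v$ for an arbitrary analytic $W$-valued initial vector (handled above) and, as in Lemma~\ref{StabilizingLemma}, keeping the bookkeeping of the two possibly different collapse orders $J_v$ and $J_{\hat v}$ straight through the normalized iteration; that bookkeeping is where I expect the only subtlety --- though no genuinely new idea --- to lie.
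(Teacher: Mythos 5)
Your proposal is correct and is essentially the proof the paper intends: the paper itself gives no separate argument, stating only that the lemma follows ``using the same inductive approach in Lemma~\ref{StabilizingLemma},'' and your adaptation (replacing the two conditioning pasts by two analytic initial vectors, checking analyticity of the filter via the nonnegativity argument of Proposition~\ref{jp-analytic}, and rerunning the induction with $J_v, J_{\hat v}$ to get the $4k$ bound) is exactly that adaptation.
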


Note that for $n \leq m, \hat{m}$, if $v(\eps)$ (or $\hat{v}(\eps)$) is equal to $p(y_{n-1}=\cdot|z_{-m}^{-n-1})$ (or $p(y_{n-1}=\cdot|z_{-\hat{m}}^{-n-1})$), then $p_v(z_{-n}^0)$ (or $p_{\hat{v}}(z_{-n}^0)$) will be equal to $p(z_{-n}^0|z_{-m}^{-n-1})$ (or $p(z_{-n}^0|z_{-\hat{m}}^{-n-1})$); and if for a Markov state $y$, $v(\eps)$ (or $\hat{v}(\eps)$) is equal to $p(y_{n-1}=\cdot|z_{-m}^{-n-1}y)$ (or $p(y_{n-1}=\cdot|z_{-\hat{m}}^{-n-1}y)$), then $p_v(z_{-n}^0)$ (or $p_{\hat{v}}(z_{-n}^0)$) will be equal to $p(z_{-n}^0|z_{-m}^{-n-1}y)$ (or $p(z_{-n}^0|z_{-\hat{m}}^{-n-1}y)$). It then immediately follows that
\begin{co}  \label{mixedwithy}
Given fixed sequences $z_{-m}^0, z_{-\hat{m}}^0, z_{-m}^0y_{-m-1}, \hat{z}_{-\hat{m}}^0y_{-\hat{m}-1}$ with $z_{-n}^0=\hat{z}_{-n}^0$
such that
$$
\ord(p(z_{-n}^{-1}|z_{-m}^{-n-1})), \ord(p(\hat{z}_{-n}^{-1}|\hat{z}_{-\hat{m}}^{-n-1})), \ord(p(z_{-n}^{-1}|z_{-m}^{-n-1}y_{-m-1})), \ord(p(\hat{z}_{-n}^{-1}|\hat{z}_{-\hat{m}}^{-n-1}y_{-\hat{m}-1})) \leq k,
$$
for $n \leq m, \hat{m}$ and some $k$, we have for $j$ with
$0 \leq j \leq n-4k-1$,
\begin{equation} \label{AllEqual}
b_j(z_{-m}^0y_{-m-1})=b_j(\hat{z}_{-\hat{m}}^0y_{-\hat{m}-1})=b_j(z_{-m}^0)=b_j(\hat{z}_{-\hat{m}}^0),
\end{equation}
where slightly abusing the notation, we define $b_j(z_{-m}^0y_{-m-1})$, $b_j(\hat{z}_{-\hat{m}}^0y_{-\hat{m}-1})$ as
the coefficients of the Taylor series expansions of $p(z_0|z_{-m}^0y_{-m-1})$, $p(\hat{z}_0|\hat{z}_{-\hat{m}}^0y_{-\hat{m}-1})$, respectively.
\end{co}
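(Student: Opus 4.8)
The plan is to exhibit each of the four Taylor coefficients appearing in~(\ref{AllEqual}) as a coefficient $b_{v,j}(z_{-n}^{0})$ for an appropriate $W$-valued prior $v=v(\eps)$ on the Markov states, and then to invoke the preceding lemma (the one on the coefficients $b_{v,j}$). Accordingly, set
$$
v_{1}=p(y_{-n-1}=\cdot\mid z_{-m}^{-n-1}),\qquad v_{2}=p(y_{-n-1}=\cdot\mid \hat z_{-\hat m}^{-n-1}),
$$
$$
v_{3}=p(y_{-n-1}=\cdot\mid z_{-m}^{-n-1}y_{-m-1}),\qquad v_{4}=p(y_{-n-1}=\cdot\mid \hat z_{-\hat m}^{-n-1}y_{-\hat m-1}),
$$
with the convention that an empty conditioning block (which occurs when $n=m$ or $n=\hat m$) is simply dropped, so that in those cases $v_{1}$ or $v_{2}$ is the stationary vector $\pi(\eps)$, and $v_{3}$ or $v_{4}$ is the constant point mass at $y_{-m-1}$ or $y_{-\hat m-1}$.

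First I would check that each $v_{i}$ is a legitimate input to the preceding lemma, i.e.\ a map $[0,\infty)\to W$ analytic at $\eps=0$. For $v_{1},v_{2}$ this is exactly Proposition~\ref{jp-analytic}(2), with the start and end indices shifted. For $v_{3},v_{4}$ one runs the same filter recursion $x\mapsto x\Delta_{z}(\eps)/(x\Delta_{z}(\eps)\mathbf{1})$ starting from the constant (hence analytic) point mass, and the argument in the proof of Proposition~\ref{jp-analytic}(2) goes through verbatim; here one uses that the order hypotheses of the corollary force $p(z_{-n}^{-1}\mid z_{-m}^{-n-1}y_{-m-1})$ (resp.\ $p(\hat z_{-n}^{-1}\mid\hat z_{-\hat m}^{-n-1}y_{-\hat m-1})$) to be well defined and not identically zero, so every denominator along the recursion is not identically zero and the filter is defined in a neighbourhood of $\eps=0$.

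Next I would use the hidden Markov forward recursion (equivalently the Markov property of $Y$) to identify, as formal power series in $\eps$,
$$
p_{v_{1}}(z_{-n}^{0})=v_{1}\Delta_{z_{-n}}\cdots\Delta_{z_{0}}\mathbf{1}=p(z_{-n}^{0}\mid z_{-m}^{-n-1}),\qquad p_{v_{1}}(z_{-n}^{-1})=p(z_{-n}^{-1}\mid z_{-m}^{-n-1}),
$$
so that $p_{v_{1}}(z_{0}\mid z_{-n}^{-1})=p(z_{0}\mid z_{-m}^{-1})$ and hence $b_{v_{1},j}(z_{-n}^{0})=b_{j}(z_{-m}^{0})$ for every $j$. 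The same computation (using $\hat z_{-n}^{0}=z_{-n}^{0}$) gives $b_{v_{2},j}(z_{-n}^{0})=b_{j}(\hat z_{-\hat m}^{0})$, $b_{v_{3},j}(z_{-n}^{0})=b_{j}(z_{-m}^{0}y_{-m-1})$ and $b_{v_{4},j}(z_{-n}^{0})=b_{j}(\hat z_{-\hat m}^{0}y_{-\hat m-1})$, the last two being the coefficients of $p(z_{0}\mid z_{-m}^{-1}y_{-m-1})$ and $p(\hat z_{0}\mid \hat z_{-\hat m}^{-1}y_{-\hat m-1})$ that define those symbols. Moreover, since $z_{-n}^{-1}=\hat z_{-n}^{-1}$, the four power series $p_{v_{i}}(z_{-n}^{-1})$ are precisely the four conditional probabilities whose orders are assumed to be $\le k$, so $\ord(p_{v_{i}}(z_{-n}^{-1}))\le k$ for $i=1,2,3,4$.

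Finally I would apply the preceding lemma to the three pairs $(v_{1},v_{2})$, $(v_{1},v_{3})$, $(v_{1},v_{4})$ against the common sequence $z_{-n}^{0}$; each pair is admissible by the two previous steps, so the lemma yields $b_{v_{1},j}(z_{-n}^{0})=b_{v_{i},j}(z_{-n}^{0})$ for $0\le j\le n-4k-1$ and $i=2,3,4$, which is exactly~(\ref{AllEqual}). I expect no genuine difficulty beyond this: the corollary is a repackaging of the lemma, so the only thing requiring care is the bookkeeping — placing the initializing prior at time $-n-1$, checking that the conditioning blocks are either nonempty or collapse to $\pi(\eps)$ / a point mass so that the $v_{i}$ exist and are analytic, and verifying that the four series $p_{v_{i}}(z_{-n}^{-1})$ literally coincide with the four conditional probabilities named in the hypotheses.
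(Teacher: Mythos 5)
Your proposal is correct and follows essentially the same route as the paper: the paper justifies the corollary precisely by observing that choosing $v$ (resp.\ $\hat v$) to be $p(y_{-n-1}=\cdot\,|z_{-m}^{-n-1})$, $p(y_{-n-1}=\cdot\,|\hat z_{-\hat m}^{-n-1})$, or the corresponding distributions conditioned additionally on the initial Markov state, makes $p_v(z_{-n}^0)$ coincide with the four conditional probabilities in question, and then applying the preceding lemma on the coefficients $b_{v,j}$. Your write-up merely supplies the bookkeeping (analyticity of the $v_i$, the empty-conditioning edge cases, and the pairwise applications of the lemma) that the paper leaves implicit in its ``it then immediately follows'' step.
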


Consider expression (\ref{expansion-1}). In the following, we use $p^{<l>}(z_0|z_{-n}^{-1})$ to denote the truncated (up to the $(l+1)$-st term) Taylor series expansion of $p(z_0|z_{-n}^{-1})$, i.e.,
$$
p^{<l>}(z_0|z_{-n}^{-1})=b_0(z_{-n}^0)+b_1(z_{-n}^0) \eps
+b_2(z_{-n}^0) \eps^2+\cdots+b_l(z_{-n}^0) \eps^l.
$$

\begin{thm}   \label{main}
For a hidden Markov chain $Z$ with normally parameterized $\Delta(\eps)$, we have for any $k \geq 0$,
\begin{equation}   \label{mainFormula}
H(Z)=H(Z)|_{\eps=0}+\sum_{j=1}^{k+1} f_j \eps^j \log \eps+\sum_{j=1}^{k} g_j \eps^j + O(\eps^{k+1}),
\end{equation}
where $f_j$'s and $g_j$'s for $j=1, 2, \cdots, k+1$ are functions (more specifically, elementary functions built from log and polynomials) of $\Delta^{(i)}(0)$ for $0 \leq i \leq 6k+6$ and can be computed from $H_{6k+6}(Z(\eps))$.
\end{thm}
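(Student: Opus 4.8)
The plan is to reduce the statement about $H(Z)$ — an integral against Blackwell's measure — to a finite computation involving only the $n$-th order conditional entropy $H_{n}(Z(\eps)) := H(Z_0 \mid Z_{-n}^{-1})$ for a suitable choice $n = 6k+6$, and then to control the remainder via the stabilization results proved above. First I would recall the standard sandwich $H_{n+1}(Z) \le H(Z) \le H_n(Z)$ and, more importantly, the explicit error bound coming from the exponential convergence of $H_n(Z)$ to $H(Z)$ that holds once $\eps>0$ (so that $\Delta(\eps)$ is irreducible); the point is to track how this convergence degrades as $\eps\to 0$, i.e. near the weak Black Hole. Concretely, $H(Z) - H_n(Z) = -\sum_{z_{-n}^0} p(z_{-n}^0)\big(\log p(z_0\mid z_{-n}^{-1}) - \mathbb{E}[\log p(z_0\mid z_{-\infty}^{-1})\mid z_{-n}^0]\big)$, and I would bound the inner difference using Corollary~\ref{mixedwithy}: writing the conditioning on the infinite past as an average over the extra symbols $z_{-m}^{-n-1}$ (or the Markov state $y_{-m-1}$), the coefficients $b_j$ of the two Taylor expansions agree for $0 \le j \le n - 4k' - 1$, where $k'$ bounds the relevant orders $\ord(p(z_{-n}^{-1}\mid \cdot))$.

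The second step is to get a uniform (in the conditioning strings) bound on those orders $k'$. Here I would use part~1 of Proposition~\ref{jp-analytic} together with a compactness/finiteness argument: there are only finitely many symbols, and for each the matrices $\Delta_a(0)$ are rank one or zero, so any product $\Delta_{z_{-n}} \cdots \Delta_{z_{-1}}$ that is not identically zero has order bounded by a constant depending only on $\Delta(0)$ and its first few derivatives — in fact one can show $\ord \le $ (number of "bad" transitions) which for the strings that actually carry positive probability when $\eps>0$ is uniformly bounded. Call this uniform bound $k'$; then choosing $n$ so that $n - 4k' - 1 \ge k+1$ — say $n = 6k+6$ once one checks $k' \le k + $ const from the weak Black Hole structure — guarantees that $p(z_0\mid z_{-n}^{-1})$ and $p(z_0\mid z_{-m}^{-1})$ agree up to order $\eps^{k+1}$, uniformly. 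This forces $H(Z) - H_n(Z) = O(\eps^{k+1})$ (the $\eps^{j}\log\eps$ terms also cancel in the difference up to that order, since those arise only from the $-r\log r$ nonlinearity applied to the stabilized expansion).

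The third step is to analyze $H_n(Z(\eps)) = -\sum_{z_{-n}^0} p_X$-derived $p(z_{-n}^0) \log p(z_0\mid z_{-n}^{-1})$ directly. By Proposition~\ref{jp-analytic}, each $p(z_{-n}^{-1})$ and $p(z_0\mid z_{-n}^{-1})$ is analytic at $\eps=0$, so $p(z_{-n}^{-1}) = \Theta(\eps^{k'(z)})$ for some order $k'(z) \le k'$, and the only source of non-analyticity in $H_n$ is the composition $u \log u$ with $u = p(z_{-n}^0) = \Theta(\eps^{k'(z)})$: expanding $\log u = k'(z)\log\eps + \log(\text{analytic, nonzero at }0)$ produces exactly the $\eps^j \log \eps$ terms, while $u$ itself contributes the polynomial part, and products $\eps^{k'}\cdot(k'\log\eps + \cdots)$ yield $f_j \eps^j\log\eps + g_j\eps^j$ with $f_j, g_j$ elementary functions of the Taylor coefficients of the relevant $p(z_{-n}^0)$ and $p(z_0\mid z_{-n}^{-1})$. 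By Lemma~\ref{independence} and the iteration~(\ref{iteration}), those Taylor coefficients are polynomials in $\Delta^{(i)}(0)$ for $0 \le i \le n = 6k+6$; since only finitely many strings $z_{-n}^0$ are involved, $f_j, g_j$ are as claimed, and they are literally read off from $H_n(Z(\eps)) = H_{6k+6}(Z(\eps))$.

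The main obstacle I anticipate is the uniform order bound $k'$ and the precise bookkeeping that $n = 6k+6$ suffices: one must argue that, over all strings $z_{-m}^{-n-1}$ (and all Markov states $y_{-m-1}$) that can legitimately appear in the conditioning, $\ord(p(z_{-n}^{-1}\mid z_{-m}^{-n-1}))$ stays $\le k$-plus-a-controlled-constant, so that the hypothesis "$\le k$" of Corollary~\ref{mixedwithy} is met with $n$ only a bounded multiple of $k$ larger. This is where the \emph{weak} Black Hole hypothesis (each $\Delta_a(0)$ rank one or zero, no positivity of columns assumed) does real work: it controls how fast orders can accumulate along a string, and one needs the structural fact that a product of rank-one-or-zero matrices, if nonzero, has order governed by a bounded quantity. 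Once that uniform bound is in hand, matching it against the "$n - 4k - 1$" threshold in Lemma~\ref{StabilizingLemma} and Corollary~\ref{mixedwithy} is arithmetic; the factor $6 = 4 + 2$ (and $6k+6$) comes from $n - 4k' - 1 \ge k+1$ with $k' \lesssim k$.
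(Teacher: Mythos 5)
Your overall strategy (reduce everything to a finite-order conditional entropy $H_{6k+6}$ and control the remainder with the stabilization results) is in the spirit of the paper, but your second step contains a genuine gap. You posit a \emph{uniform} bound $k'$ on $\ord(p(z_{-n}^{-1}\mid z_{-m}^{-n-1}))$ over all conditioning strings, claiming $k'\le k+\mathrm{const}$ ``from the weak Black Hole structure.'' No such bound exists: the order of a block grows with the number of ``noisy'' transitions it contains (for instance, in Example~\ref{IV} a block containing $j$ erasures has order at least $j$), and such blocks have positive probability for every $\eps>0$ and occur for every $j$ up to roughly $n$. So the hypothesis ``$\ord\le k$'' of Lemma~\ref{StabilizingLemma} and Corollary~\ref{mixedwithy} cannot be verified uniformly over strings, and the choice $n=6k+6$ cannot be justified the way you describe. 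The paper's proof does something different at precisely this point: it \emph{splits} the sum defining $H_n(Z)$ according to $\ord(p(z_{-n}^0))$, discards all strings with $\ord(p(z_{-n}^0))\ge k+2$ at a total cost of $O(\eps^{k+1})$ (this is (\ref{eq1})), and applies the stabilization lemma only to the surviving strings, whose orders are $\le k+1$ by construction; with those orders the stabilization threshold $n_0-4(k+1)-1=2k+1$ is exactly what dictates both the truncation $p^{<2k+1>}(z_0|z_{-n}^{-1})$ inside the logarithm and the constant $n_0=6k+6$.

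A second, smaller issue is your first step: you compare $H(Z)$ with $H_n(Z)$ by representing $H(Z)$ through conditioning on the infinite past and then Taylor-expanding those infinite-past conditional probabilities. Their analyticity or stabilization at $\eps=0$ is neither established in the paper nor needed, and making your version rigorous would require an interchange of the limits $m\to\infty$ and $\eps\to 0$ with uniform error control. The paper avoids this entirely by sandwiching $H(Z)$ between the Birch upper bound $H_n(Z)=H(Z_0|Z_{-n}^{-1})$ and the Birch lower bound $\tilde{H}_n(Z)=H(Z_0|Z_{-n}^{-1}Y_{-n-1})$ for a fixed finite $n\ge n_0$, using Corollary~\ref{mixedwithy} to show that both bounds equal the same expression, read off from $H_{n_0}(Z)$, up to $O(\eps^{k+1})$. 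If you replace your uniform-order-bound step by this order-splitting argument and your infinite-past comparison by the two-sided finite-$n$ sandwich, your outline becomes essentially the paper's proof; as written, however, the key step rests on a structural claim that is false.
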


\begin{proof}

First fix $n$ such that $n \geq n_0=6k+6$. Consider the Birch upper bound on $H(Z)$
$$
H_n(Z):=H(Z_0|Z_{-n}^{-1})=- \sum_{z_{-n}^0} p(z_{-n}^0) \log
p(z_0|z_{-n}^{-1}).
$$
Note that for $j \geq k+2$,
\begin{equation}  \label{eq1}
\left| \sum_{\ord(p(z_{-n}^{0})) = j} p(z_{-n}^0) \log
p(z_0|z_{-n}^{-1}) \right| = O(\eps^{k+1}).
\end{equation}
So, in the following we only consider the sequences $z_{-n}^0$ with
$\ord(p(z_{-n}^0)) \leq k+1$. For such sequences,
since $\ord(p(z_0|z_{-n}^{-1})) \leq \ord(p(z_{-n}^0)) \leq k+1$, we have
\begin{equation}  \label{eq2}
|\log p(z_0|z_{-n}^{-1})-\log p^{<2k+1>}(z_0|z_{-n}^{-1})| =
O(\eps^{k+1});
\end{equation}
and by Lemma~\ref{StabilizingLemma}, we have
\begin{equation}  \label{eq3}
p^{<2k+1>}(z_0|z_{-n}^{-1})=p^{<2k+1>}(z_0|z_{-n_0}^{-1}).
\end{equation}

Now for any fixed $n \geq n_0$,
\begin{eqnarray}
\notag H_n(Z)&=&\sum_{z_{-n}^0}
-p(z_{-n}^0) \log p(z_0|z_{-n}^{-1})\\
\notag &\stackrel{(a)}{=}&\sum_{\ord(p(z_{-n}^{0})) \leq k+1}
-p(z_{-n}^0) \log p(z_0|z_{-n}^{-1})+O(\eps^{k+1})\\
\notag &\stackrel{(b)}{=}&\sum_{\ord(p(z_{-n}^{0})) \leq k+1}
-p(z_{-n}^0) \log p^{<2k+1>}(z_0|z_{-n}^{-1})+O(\eps^{k+1})\\
\notag &\stackrel{(c)}{=}&\sum_{\ord(p(z_{-n_0}^{0})) \leq k+1}
-p(z_{-n}^0) \log p^{<2k+1>}(z_0|z_{-n_0}^{-1})+O(\eps^{k+1})\\
\label{HNZ}  &=&\sum_{\ord(p(z_{-n_0}^{0})) \leq k+1}
-p(z_{-n_0}^0) \log p^{<2k+1>}(z_0|z_{-n_0}^{-1})+O(\eps^{k+1}),
\end{eqnarray}

where (a) follows from (\ref{eq1}); (b) follows from (\ref{eq2}); (c) follows from (\ref{eq3}), (\ref{eq1}) and the fact that
$$
\{z_{-n}^0: \ord(p(z_{-n_0}^{0})) \leq k+1\}
$$
$$=\{z_{-n}^0: \ord(p(z_{-n}^{0})) \leq k+1\} \cup \{z_{-n}^0: \ord(p(z_{-n_0}^{0})) \leq k+1, \ord(p(z_{-n}^{0})) \geq k+2\}.
$$
Expanding (\ref{HNZ}), we obtain:
$$
H_n(Z)=H(Z)|_{\eps=0}+\sum_{j=1}^{k+1} f_j \eps^j \log \eps+\sum_{j=1}^{k} g_j \eps^j + O(\eps^{k+1}),
$$
where $f_j$'s and $g_j$'s for $j=1, 2, \cdots, k+1$ are functions dependent only on $\Delta^{(i)}(0)$ for $0 \leq i \leq n_0$ and can be computed from $H_{n_0}(Z)$ (in fact for fixed $j$, $f_j$ and $g_j$ are functions dependent only on $\Delta^{(i)}(0)$ for $0 \leq i \leq 6j+6$ and can be computed from $H_{6j+6}(Z)$). In
particular,
\begin{equation} \label{eps}
\sum_{\ord(p(z_{-n}^0)) \leq k+1} \sum_{\ord(p(z_0|z_{-n_0}^{-1})) =
0} -p(z_{-n_0}^0) \log p^{<2k+1>}(z_0|z_{-n_0}^{-1})
\end{equation}
will contribute to $H(Z)|_{\eps=0}$ and the terms $\eps^j$, and
\begin{equation} \label{epslogeps}
\sum_{\ord(p(z_{-n}^0)) \leq k+1}
\sum_{\ord(p(z_0|z_{-n_0}^{-1})) > 0} -p(z_{-n_0}^0) \log
p^{<2k+1>}(z_0|z_{-n_0}^{-1})
\end{equation}
will contribute to the terms $\eps^j \log \eps$ and the terms
$\eps^j$.

Using Corollary~\ref{mixedwithy}, one can apply similar argument as above to the Birch lower bound
$$
\tilde{H}_n(Z):=H(Z_0|Z_{-n}^{-1} Y_{-n-1})=\sum_{z_{-n}^0,
y_{-n-1}} -p(z_{-n}^0 y_{-n-1}) \log p(z_0|z_{-n}^{-1} y_{-n-1}).
$$
For the same $n_0$, one can show that $\tilde{H}_n(Z)$ takes the
same form (\ref{HNZ}) as $H_n(Z)$, which implies that
$H_n(Z)$ and $\tilde{H}_n(Z)$ have exactly the same coefficients of
$\eps^j$ for $j \leq k$ and of $\eps^j \log \eps$ for $j \leq k+1$
when $n \geq n_0$. We thus prove the theorem.
\end{proof}

\begin{rem}
Theorem~\ref{main} still holds if we assume each entry of $\Delta(\eps)$ is merely a $C^{k+1}$ function of $\eps$ in a neighborhood of $\eps=0$: the proof still works if ``analytic'' is replaced by ``$C^{k+1}$'', and the Taylor series expansions are replaced by Taylor polynomials with remainder. We assumed analyticity of the parametrization only for simplicity.
\end{rem}

\begin{rem} Note that at a Black Hole, we have $\ord(p(z_0|z_{-n}^{-1})) = 0$ for any hidden Markov symbol
sequence $z_{-n}^0$.  Thus, from the discussion surrounding
expressions (\ref{eps}) and (\ref{epslogeps}) above, we see that
$f_j=0$ for all $j$. By the proof of Theorem~\ref{main}, Formula
(\ref{mainFormula}) is a Taylor polynomial with remainder; this is
consistent with the Taylor series formula for a Black Hole
in~\cite{hm06b}.
\end{rem}

\begin{rem}
The proof of Theorem~\ref{main} shows that for $n \geq n_0$,
$H_n(Z), \tilde{H}_n(Z)$ take the same form as in
(\ref{mainFormula}) with the same coefficients.
\end{rem}

\section{Applications to Finite-State Memoryless Channels at High Signal-to-Noise Ratio} \label{III}

Consider a finite-state memoryless channel with stationary input process.
Here, $C=\{C_n\}$ is an i.i.d. channel state process over finite alphabet $\mathcal{C}$ with $p_C(c)=q_c$ for $c \in  \mathcal{C}$, $X=\{ X_n \}$ is a stationary input process, independent of $C$, over finite alphabet $\mathcal{X}$
and $Z=\{Z_n \}$  is the resulting (stationary) output process over finite alphabet
$\mathcal{Z}$.  Let $p(z_n|x_n, c_n)=P(Z_n=z_n|X_n=x_n, C_n=c_n)$ denote the probability
that at time $n$, the channel output symbol is $z_n$ given that the
channel state is $c_n$ and the channel input is $x_n$. The mutual
information for such a channel is:
$$
I(X, Z):=H(Z)-H(Z|X)\stackrel{(*)}{=}H(Z)-\sum_{x \in \mathcal{X}, z \in \mathcal{Z}}
p(x, z) \log p(z|x),
$$
where $(*)$ follows from the memoryless property of the channel, and for $x \in \mathcal{X}, z \in \mathcal{Z}$,
$$
p(x, z)=\sum_{c \in \mathcal{C}} p(z|x, c)p(x)p(c), \qquad p(z|x)=\sum_{c \in \mathcal{C}} p(z|x, c) p(c).
$$

Now we introduce an alternative framework, using the concept of channel noise.  As above,
let $C$ be an i.i.d. channel state process, and let $X$ be a stationary input process, independent of $C$, over finite alphabets
 $\mathcal{C}$, $\mathcal{X}$. Let $\mathcal{E}$ (resp., $\mathcal{Z}$) be finite alphabets of abstract error events
(resp. output symbols) and let $\Phi: \mathcal{X} \times \mathcal{C} \times
\mathcal{E} \rightarrow \mathcal{Z}$ be a function.  For each $x \in \mathcal{X}$
and $c \in \mathcal{C}$, let $p(\cdot | x, c)$ be a conditional probability distribution
on $\mathcal{E}$.  This defines a jointly distributed stationary process $(X,C,E)$ over
$\mathcal{X} \times \mathcal{C} \times \mathcal{E}$. If $X$ is a first order Markov chain
with transition probability matrix $\Pi$, then
$(X,C,E)$ is a Markov chain with transition probability matrix $\Delta$,
defined by
$$
\Delta_{(x,c,e), (y,d,f)} = \Pi_{xy} \cdot q_d   \cdot p(f | y,d)
$$
and $\Phi, \Delta$ define a hidden Markov chain, denoted $Z(\Delta, \Phi)$.

We claim that  the output process $Z$, described in the first paragraph of this section, fits into this alternative
framework (when $X$ is a first order Markov chain). To see this,
let $\mathcal{E} = \mathcal{X} \times \mathcal{C} \times \mathcal{Z}$, and define
$p(e = (x,c,z)|x',c') = p(z|x,c)$ if $x = x'$ and $c = c'$, and $0$ otherwise.
Define $\Phi(x',c', (x,c,z)) = z$.  Then, $Z = Z(\Delta, \Phi)$ is a hidden Markov chain.
So, from hereon we adopt the alternative framework.

Now, we assume that $X$ is an irreducible first order Markov chain and that
the channel is parameterized by $\eps$ such that for
each $x, c,$ and $e$,  $p(e|x,c)(\eps)$  are analytic functions of $\eps
\ge 0.$ For each $\eps \ge 0$, let $\Delta(\eps)$ denote the corresponding transition probability
matrix on state set $\mathcal{X} \times \mathcal{C} \times \mathcal{E}$
and $\{Z(\eps)\}$ denote the family of resulting output hidden Markov chains. We also assume that
there is a one-to-one function from $\mathcal{X}$ into $\mathcal{Z}$, $z  = z(x)$, such that for all $c$, 
$p(z(x)|x,c)(0) = 1$. In other words, $\eps$ behaves like a ``composite index'' indicating how good the channel is, and small $\eps$ corresponds to the high signal-to-noise ratio. Then one can
verify that $\Delta(0)$ is a weak black hole and  $\Delta(\eps)$ is normally parameterized.
Thus, by Theorem~\ref{main}, we obtain an asymptotic formula for $H(Z(\eps))$ around $\eps =0$.
We remark that the above naturally generalizes to the case where
$X$ is a higher order irreducible Markov chain (through appropriately grouping matrices into blocks).

In the remainder of this section, we give three examples to illustrate the idea.

\begin{exmp}[Binary Markov Chains Corrupted by BSC($\eps$)]  \label{treated}

Consider a binary symmetric channel with crossover probability
$\varepsilon$. At time $n$ the channel can be characterized by the following equation
$$
Z_n=X_n \oplus E_n,
$$
where $\{X_n\}$ denotes the input process, $\oplus$ denotes binary addition, $\{E_n\}$ denotes the i.i.d.
binary noise with $p_E(0)=1-\varepsilon$ and $p_E(1)=\varepsilon$, and $\{Z_n\}$ denotes the corrupted output.
Note that this channel only has one channel state, and at $\eps=0$, $p_{Z|X}(1|1)=1, p_{Z|X}(0|0)=1$, so it fits in
the alternative framework described in the beginning of Section~\ref{III}.

Indeed, suppose $X$ is a first order irreducible Markov chain with the
transition probability matrix
$$
\Pi=\left[\begin{array}{cc}
       \pi_{00}&\pi_{01}\\
       \pi_{10}&\pi_{11}\\
       \end{array}\right].
$$
Then $Y=\{Y_n\}=\{(X_n, E_n)\}$ is jointly Markov with transition probability matrix (the column and row indices of the following matrix are ordered alphabetically):
$$
\Delta = \left [ \begin{array}{cccc}
\pi_{00} (1-\varepsilon) & \pi_{00} \varepsilon & \pi_{01} (1-\varepsilon) & \pi_{01} \varepsilon\\
\pi_{00} (1-\varepsilon) & \pi_{00} \varepsilon & \pi_{01} (1-\varepsilon) & \pi_{01} \varepsilon\\
\pi_{10} (1-\varepsilon) & \pi_{10} \varepsilon & \pi_{11} (1-\varepsilon) & \pi_{11} \varepsilon\\
\pi_{10} (1-\varepsilon) & \pi_{10} \varepsilon & \pi_{11} (1-\varepsilon) & \pi_{11} \varepsilon\\
\end{array} \right ],
$$
and $Z=\Phi(Y)$ is a hidden Markov chain with $\Phi(0, 0)=\Phi(1, 1)=0$, $\Phi(0,
1)=\Phi(1, 0)=1$. When $\eps =0$,
$$
\Delta = \left [ \begin{array}{cccc}
\pi_{00}  & 0 & \pi_{01}  & 0\\
\pi_{00} & 0 & \pi_{01}  & 0\\
\pi_{10}  & 0 & \pi_{11}  & 0\\
\pi_{10}  & 0 & \pi_{11}  & 0\\
\end{array} \right ], \Delta_0 = \left [ \begin{array}{cccc}
\pi_{00}  & 0 & 0  & 0\\
\pi_{00}  & 0 & 0  & 0\\
\pi_{10}  & 0 & 0  & 0\\
\pi_{10}  & 0 & 0  & 0\\
\end{array} \right ], \Delta_1 = \left [ \begin{array}{cccc}
0  & 0 & \pi_{01}  & 0\\
0  & 0 & \pi_{01}  & 0\\
0  & 0 & \pi_{11}  & 0\\
0  & 0 & \pi_{11}  & 0\\
\end{array} \right ],
$$
thus both $\Delta_0$ and $\Delta_1$ have rank one. If $\pi_{ij}$'s are all positive, then
we have a Black Hole case, for which one can derive the Taylor series
expansion of $H(Z)$ around $\eps=0$~\cite{zu05, hm06b}; if
$\pi_{00}$ or $\pi_{11}$ are zero, then this is a weak Black hole case with normal parameterization (of $\eps$), for which Theorem~\ref{main} can be applied and an asymptotic formula for $H(Z)$ around $\eps=0$ can be derived.

For a first order Markov chain $X$ with the
following transition probability matrix
$$
\left[ \begin{array}{cc}
1-p & p \\
1& 0
\end{array} \right],
$$
where $0\leq p \leq 1$, it has been shown~\cite{or04} that
$$
H(Z)=H(X)-\frac{p(2-p)}{1+p} \eps \log \eps +O(\eps)
$$
as $\eps \to 0$. This result has been further generalized~\cite{hm06a, jss07} to the following formula:
\begin{equation} \label{twoterms}
H(Z) = H(X) + f(X) \eps \log (1/\eps) + g(X) \eps + O(\eps^2 \log \eps),
\end{equation}
where $X$ is the input Markov chain of any order $m$ with transition probabilities $P(X_t=a_0|X_{t-m}^{t-1}=a_{-m}^{-1})$,
$a_{-m}^0 \in \mathcal{X}^m$, where $\mathcal{X}=\{0, 1\}$, $Z$ is the output process obtained by passing
$X$ through a BSC($\eps$), and $f(X)$ and $g(X)$ can be explicitly computed. Theorem~\ref{main} can be used to
generalize (\ref{twoterms}) to a formula with higher asymptotic terms. In particular, when $P(X_t=a_0|X_{t-m}^{t-1}=a_{-m}^{-1}) > 0$ for $a_{-m}^0 \in \mathcal{X}^{m+1}$, we have a Black Hole, in which
case, the Taylor series expansions of $H(Z)$ around $\eps=0$ can be explicitly computed (in principle); when $P(X_t=a_0|X_{t-m}^{t-1}=a_{-m}^{-1}) = 0$ for some $a_{-m}^0 \in \mathcal{X}^{m+1}$, we have a weak Black Hole, in which case an asymptotic formula of $H(Z)$ around $\eps=0$ can be obtained.

\end{exmp}

\begin{exmp}[Binary Markov Chains Corrupted by BEC($\eps$)] \label{IV}

Consider a binary erasure channel with fixed erasure rate $\eps$
(denoted by BEC($\eps$)). At time $n$ the channel can be characterized by the following equation
$$
Z_n=\left\{\begin{array}{cc}
             X_n & \mbox{ if } E_n=0\\
             e   & \mbox{ if } E_n=1\\
             \end{array}\right. ,
$$
where $\{X_n\}$ denotes the input process, $e$ denotes the erasure, $\{E_n\}$ denotes the i.i.d.
binary noise with $p_E(0)=1-\varepsilon$ and $p_E(1)=\varepsilon$,
and $\{Z_n\}$ denotes the corrupted output. Again this channel only has one channel state, and at $\eps=0$, $p_{Z|X}(1|1)=1, p_{Z|X}(0|0)=1$, so it fits in the alternative framework described in the beginning of Section~\ref{III}.

If the input $X$ is a first order irreducible Markov chain with transition probability
matrix
$$
\Pi=\left[\begin{array}{cc}
       \pi_{00}&\pi_{01}\\
       \pi_{10}&\pi_{11}\\
       \end{array}\right],
$$
and let $Z$ denote the output process. Then $Y=(X, E)$ is jointly
Markov with (the column and row indices of the following matrix are ordered alphabetically)
$$
\Delta = \left [ \begin{array}{cccc}
\pi_{00} (1-\eps) & \pi_{00} \eps & \pi_{01} (1-\eps) & \pi_{01} \eps\\
\pi_{00} (1-\eps) & \pi_{00} \eps & \pi_{01} (1-\eps) & \pi_{01} \eps\\
\pi_{10} (1-\eps) & \pi_{10} \eps & \pi_{11} (1-\eps) & \pi_{11} \eps\\
\pi_{10} (1-\eps) & \pi_{10} \eps & \pi_{11} (1-\eps) & \pi_{11} \eps\\
\end{array} \right ],
$$
and $Z=\Phi(Y)$ is hidden Markov with $\Phi(0, 1)=\Phi(1,
1)=e$, $\Phi(0, 0)=0$ and $\Phi(1, 0)=1$.

Now one checks that
$$
\Delta_0 = \left [ \begin{array}{cccc}
\pi_{00} (1-\eps) & 0 & 0 & 0\\
\pi_{00} (1-\eps) & 0 & 0 & 0\\
\pi_{10} (1-\eps) & 0 & 0 & 0\\
\pi_{10} (1-\eps) & 0 & 0 & 0\\
\end{array} \right ], \Delta_1 = \left [ \begin{array}{cccc}
0 & 0 & \pi_{01} (1-\eps) & 0\\
0 & 0 & \pi_{01} (1-\eps) & 0\\
0 & 0 & \pi_{11} (1-\eps) & 0\\
0 & 0 & \pi_{11} (1-\eps) & 0\\
\end{array} \right ], \Delta_e = \left [ \begin{array}{cccc}
0 & \pi_{00} \eps & 0 & \pi_{01} \eps\\
0 & \pi_{00} \eps & 0 & \pi_{01} \eps\\
0 & \pi_{10} \eps & 0 & \pi_{11} \eps\\
0 & \pi_{10} \eps & 0 & \pi_{11} \eps\\
\end{array} \right ].
$$
One checks that $\Delta(\eps)$ is normally parameterized by $\eps$ and thus Theorem~\ref{main} can be applied. Furthermore, Theorem~\ref{main} can be applied to the case when the input is an $m$-th order irreducible Markov chain $X$ to obtain asymptotic formula for $H(Z)$ around $\eps=0$.

\end{exmp}

\begin{exmp}[Binary Markov Chains Corrupted by Special Gilbert-Elliot Channel] \label{V}

Consider a binary Gilbert-Elliot channel, whose channel state (denoted by $C=\{C_n\}$) varies as an i.i.d. binary stochastic process with $p_C(0)=q_0, p_C(1)=q_1$ (here the channel
state varies as an i.i.d. process, rather than a generic Markov process). At time $n$ the channel can be characterized by the following equation
$$
Z_n=X_n \oplus E_n,
$$
where $\{X_n\}$ denotes the input process, $\oplus$ denotes binary addition, $\{E_n\}$ denotes the i.i.d.
binary noise with $p_{E|C}(0|0)=1-\eps_{0}$, $p_{E|C}(0|1)=1-\eps_{1}$, $p_{E|C}(1|0)=\eps_{0}$, $p_{E|C}(1|1)=\eps_{1}$
and $\{Z_n\}$ denotes the corrupted output. For such a channel, $p_{Z|(X, C)}(1|1, c)=1, p_{Z|(X, C)}(0|0, c)=1$ at $\eps=0$ for any channel state $c$. So it fits in the alternative framework described in the beginning of Section~\ref{III}.

To see this in more detail, we consider the special case when the input $X$ is a first order irreducible Markov chain with transition probability matrix
$$
\Pi=\left[\begin{array}{cc}
       \pi_{00}&\pi_{01}\\
       \pi_{10}&\pi_{11}\\
       \end{array}\right],
$$
and let $Z$ denote the output process. Then $Y=(X, C, E)$ is jointly
Markov with (the column and row indices of the following matrix are ordered alphabetically)
{\scriptsize
$$
\Delta = \left [ \begin{array}{cccccccc}
\pi_{00} q_{0} (1-\eps_0) & \pi_{00} q_{0} \eps_0 & \pi_{00} q_{1} (1-\eps_1) & \pi_{00} q_{1} \eps_1&\pi_{01} q_{0} (1-\eps_0) & \pi_{01} q_{0} \eps_0 & \pi_{01} q_{1} (1-\eps_1) & \pi_{01} q_{1} \eps_1 \\
\pi_{00} q_{0} (1-\eps_0) & \pi_{00} q_{0} \eps_0 & \pi_{00} q_{1} (1-\eps_1) & \pi_{00} q_{1} \eps_1&\pi_{01} q_{0} (1-\eps_0) & \pi_{01} q_{0} \eps_0 & \pi_{01} q_{1} (1-\eps_1) & \pi_{01} q_{1} \eps_1 \\
\pi_{00} q_{0} (1-\eps_0) & \pi_{00} q_{0} \eps_0 & \pi_{00} q_{1} (1-\eps_1) & \pi_{00} q_{1} \eps_1&\pi_{01} q_{0} (1-\eps_0) & \pi_{01} q_{0} \eps_0 & \pi_{01} q_{1} (1-\eps_1) & \pi_{01} q_{1} \eps_1 \\
\pi_{00} q_{0} (1-\eps_0) & \pi_{00} q_{0} \eps_0 & \pi_{00} q_{1} (1-\eps_1) & \pi_{00} q_{1} \eps_1&\pi_{01} q_{0} (1-\eps_0) & \pi_{01} q_{0} \eps_0 & \pi_{01} q_{1} (1-\eps_1) & \pi_{01} q_{1} \eps_1 \\
\pi_{10} q_{0} (1-\eps_0) & \pi_{10} q_{0} \eps_0 & \pi_{10} q_{1} (1-\eps_1) & \pi_{10} q_{1} \eps_1&\pi_{11} q_{0} (1-\eps_0) & \pi_{11} q_{0} \eps_0 & \pi_{11} q_{1} (1-\eps_1) & \pi_{11} q_{1} \eps_1 \\
\pi_{10} q_{0} (1-\eps_0) & \pi_{10} q_{0} \eps_0 & \pi_{10} q_{1} (1-\eps_1) & \pi_{10} q_{1} \eps_1&\pi_{11} q_{0} (1-\eps_0) & \pi_{11} q_{0} \eps_0 & \pi_{11} q_{1} (1-\eps_1) & \pi_{11} q_{1} \eps_1 \\
\pi_{10} q_{0} (1-\eps_0) & \pi_{10} q_{0} \eps_0 & \pi_{10} q_{1} (1-\eps_1) & \pi_{10} q_{1} \eps_1&\pi_{11} q_{0} (1-\eps_0) & \pi_{11} q_{0} \eps_0 & \pi_{11} q_{1} (1-\eps_1) & \pi_{11} q_{1} \eps_1 \\
\pi_{10} q_{0} (1-\eps_0) & \pi_{10} q_{0} \eps_0 & \pi_{10} q_{1} (1-\eps_1) & \pi_{10} q_{1} \eps_1&\pi_{11} q_{0} (1-\eps_0) & \pi_{11} q_{0} \eps_0 & \pi_{11} q_{1} (1-\eps_1) & \pi_{11} q_{1} \eps_1 \\
\end{array} \right ],
$$
}
and $Z=\Phi(X, C, E)$ is hidden Markov with
$$
\Phi(0, 0, 0)=\Phi(0, 1, 0)=\Phi(1, 0, 1)=\Phi(1, 1, 1)=0,
$$
$$
\Phi(0, 0, 1)=\Phi(0, 1, 1)=\Phi(1, 0, 0)=\Phi(1, 1, 0)=1.
$$

For some positive $k$, let $\eps_0=\eps, \eps_1= k \eps$. If $\eps=0$, one checks that
$$
\Delta_0 = \left [ \begin{array}{cccccccc}
\pi_{00} q_{0}  & 0 & \pi_{00} q_{1}  & 0&0  & 0 & 0  & 0 \\
\pi_{00} q_{0}  & 0 & \pi_{00} q_{1}  & 0&0  & 0 & 0  & 0 \\
\pi_{00} q_{0}  & 0 & \pi_{00} q_{1}  & 0&0  & 0 & 0  & 0 \\
\pi_{00} q_{0}  & 0 & \pi_{00} q_{1}  & 0&0  & 0 & 0  & 0 \\
\pi_{10} q_{0}  & 0 & \pi_{10} q_{1}  & 0&0  & 0 & 0  & 0 \\
\pi_{10} q_{0}  & 0 & \pi_{10} q_{1}  & 0&0  & 0 & 0  & 0 \\
\pi_{10} q_{0}  & 0 & \pi_{10} q_{1}  & 0&0  & 0 & 0  & 0 \\
\pi_{10} q_{0}  & 0 & \pi_{10} q_{1}  & 0&0  & 0 & 0  & 0 \\
\end{array} \right ], \Delta_1 = \left [ \begin{array}{cccccccc}
0  & 0 & 0  & 0&\pi_{01} q_{0}  & 0 & \pi_{01} q_{1}  & 0 \\
0  & 0 & 0  & 0&\pi_{01} q_{0}  & 0 & \pi_{01} q_{1}  & 0 \\
0  & 0 & 0  & 0&\pi_{01} q_{0}  & 0 & \pi_{01} q_{1}  & 0 \\
0  & 0 & 0  & 0&\pi_{01} q_{0}  & 0 & \pi_{01} q_{1}  & 0 \\
0  & 0 & 0  & 0&\pi_{11} q_{0}  & 0 & \pi_{11} q_{1}  & 0 \\
0  & 0 & 0  & 0&\pi_{11} q_{0}  & 0 & \pi_{11} q_{1}  & 0 \\
0  & 0 & 0  & 0&\pi_{11} q_{0}  & 0 & \pi_{11} q_{1}  & 0 \\
0  & 0 & 0  & 0&\pi_{11} q_{0}  & 0 & \pi_{11} q_{1}  & 0 \\
\end{array} \right ].
$$
So, both $\Delta_0$ and $\Delta_1$ will be rank one matrices and one can check that $\Delta(\eps)$ is normally parameterized by $\eps$. Again, Theorem~\ref{main} can be applied to the case when the input is an $m$-th order irreducible Markov chain $X$ to obtain an asymptotic formula for $H(Z)$ around $\eps=0$.

\end{exmp}

\end{document}